\newcounter{myequation}
\theoremstyle{plain}%
\newtheorem{theorem}{Theorem}
\newtheorem{proposition}[theorem]{Proposition}%
\newtheorem{lemma}[theorem]{Lemma}
\theoremstyle{remark}%
\theoremstyle{definition}%
\newtheorem{definition}[theorem]{Definition}%
\DeclareMathOperator{\dist}{dist}
\DeclareMathOperator{\sgn}{sgn}
\newcommand{\di}{\partial}
\DeclareMathOperator{\Tr}{Tr}
\newcommand{\br}[1]{\left\langle#1\right\rangle}
\newcommand{\al}{\alpha}
\newcommand{\Cb}{\mathbb{C}}
\newcommand{\Zb}{\mathbb{Z}}
\newcommand{\Rb}{\mathbb{R}}
\newcommand{\one}{\ensuremath{\mathbf{1}}}
\newcommand{\Om}{\Omega}
\newcommand{\Ga}{\Gamma}
\newcommand{\z}{\zeta}
\newcommand{\abs}[1]{\ensuremath{\left\lvert#1\right\rvert}}
\newcommand{\norm}[1]{\ensuremath{\left\lVert#1\right\rVert}}
\newcommand{\Set}[1]{\left\{#1\right\}}
\newcommand{\md}[6]{\ensuremath{
		\ifinner
		\tfrac{\partial{^{#2}}#1}{\partial{#3^{#4}}\partial{#5^{#6}}}
		\else
		\tfrac{\partial{^{#2}}#1}{\partial{#3^{#4}}\partial{#5^{#6}}}
		\fi
}}
\newcommand{\del}[1]{\Bigl(#1\Bigr)}
\newcommand{\thmref}[1]{Theorem~\ref{#1}}
\newcommand{\lemref}[1]{Lemma~\ref{#1}}
\newcommand{\figref}[1]{Figure~\ref{#1}}
\definecolor{green}{rgb}{0.0, 0.5, 0.5}
\definecolor{lgray}{gray}{0.9}
\definecolor{llgray}{gray}{0.95}
\definecolor{lllgray}{gray}{0.975}
\newcommand{\cS}{\mathcal{S}}
\newcommand{\nc}{\newcommand}
\nc{\h}{\delta}
\nc{\G}{\Gamma}
\nc{\et}{\beta} 
\nc{\gam}{\gamma}
\nc{\ka}{\kappa}
\nc{\lam}{\lambda}
\nc{\Lam}{\Lambda}
\nc{\ta}{\alpha}
\nc{\w}{\omega}
\nc{\io}{\iota}
\nc{\s}{\sigma}
\nc{\vphi}{\varphi}
\nc{\e}{\epsilon}
\renewcommand{\k}{v}
\nc{\ran}{\rangle}
\nc{\lan}{\langle}
\newcommand{\im}{{\rm Im}}
\nc{\bfone}{{\bf 1}}
\nc{\dd}{\mathrm{d}}
\newcommand{\DETAILS}[1]{}
\newcommand{\Cp}{\mathrm{c}}
\DeclareMathOperator{\dG}{\mathrm{d}\Gamma}
\newcommand{\Norm}[1]{{\left\vert\kern-0.25ex\left\vert\kern-0.25ex\left\vert #1 
		\right\vert\kern-0.25ex\right\vert\kern-0.25ex\right\vert}}
\DeclareMathOperator{\conv}{conv}
\begin{document}
	
	
	\title{Macroscopic suppression of supersonic quantum transport}
	

	\author{J\'er\'emy Faupin} 
    	\altaffiliation{These authors contributed equally to this work.}
    \affiliation{Institut Elie Cartan de Lorraine, Université de Lorraine, 57045 Metz Cedex 1, France}

    \email{jeremy.faupin@univ-lorraine.fr}
	
	\author{Marius Lemm} \affiliation{Department of Mathematics, University of T\"ubingen,
		72076 T\"ubingen, Germany} \email{marius.lemm@uni-tuebingen.de}
	
	\author{Israel Michael Sigal} \affiliation{Department of Mathematics, University of Toronto, Toronto, M5S 2E4, Ontario, Canada} \email{im.sigal@utoronto.ca}
	
	\author{Jingxuan Zhang
    		(\begin{CJK*}{UTF8}{gbsn}张景宣
		\end{CJK*})
    } \affiliation{Yau Mathematical Sciences Center, Tsinghua University, Haidian District, Beijing 100084, China} \email{jingxuan@tsinghua.edu.cn}

	\date{October 15, 2025}
	
	\pagestyle{plain}
	
	\begin{abstract}
		We consider a broad class of strongly interacting quantum lattice gases, including the Fermi-Hubbard and Bose-Hubbard models. We focus on macroscopic particle clusters of size  $\theta N$, with $\theta\in(0,1)$ and $N$ the total particle number, and we study the quantum probability that such a cluster is transported across a distance $r$ within time $t$. Conventional effective light cone arguments yield a bound of the form $\exp(v t-r)$. We report a substantially stronger bound $\exp(\theta N(vt-r))$, which provides exponential suppression that scales with system size. Our result establishes a  universal dynamical large deviation principle:  macroscopic suppression of supersonic macroscopic   transport (MASSMAT). 

	\end{abstract}
	
	\maketitle

	Lieb and Robinson \cite{lieb1972finite} famously discovered that quantum lattice systems exhibit an ``effective light cone'' reminscient of relativistic systems.
	Their Lieb-Robinson bound (LRB) controls the probability that quantum information travels a distance $r>0$ in time $t>0$ by
	\begin{equation}\label{eq:LRB}
		\exp(C(v_{\mathrm{LR}}t-r))
	\end{equation}
	for constants $C,v_{\mathrm{LR}}>0$. This establishes an effective light cone $v_{\mathrm{LR}}t=r$ beyond which information propagation is exponentially suppressed. The Lieb-Robinson velocity $v_{\mathrm{LR}}$ is an $\mathcal O(1)$ quantity proportional to the strength of local interactions.

	As one of the few rigorous tools for analyzing strongly interacting quantum many-body systems, the LRB has proven remarkably powerful.  Following breakthroughs of Hastings in the early 2000s \cite{hastings2004lieb,hastings2005quasiadiabatic,hastings2007area}, it was decisive in resolving fundamental problems across condensed matter physics, quantum information theory, and high-energy physics. Applications of LRBs include exponential clustering  for gapped systems \cite{hastings2004lieb,nachtergaele2006lieb}, the definition and stability of topological quantum phases \cite{hastings2004lieb,hastings2005quasiadiabatic,bachmann2012automorphic,bravyi2010topological,nachtergaele2022quasi,yin2024low,de2024ldpc},  the area law for the entanglement entropy \cite{hastings2007area}, the control of dynamical entanglement generation \cite{bravyi2006lieb,tran2021optimal}, the many-body adiabatic theorem \cite{bachmann2017adiabatic,monaco2019adiabatic}, quantum simulation algorithms \cite{kliesch2014lieb,woods2015simulating,haah2021quantum,tong2022provably,kuwahara2024effective},  bounds on quantum messaging \cite{epstein2017quantum}, and the fast scrambling conjecture \cite{lashkari2013towards,roberts2016lieb,tran2021lieb,xu2024scrambling,lemm2025out}. Given the broad utility of LRBs, a large and continually growing body of research is concerned with extending them and related propagation bounds to new settings, e.g., to long-range interactions \cite{hastings2004lieb,nachtergaele2006lieb,chen2019finite,kuwahara2020strictly,tran2021lieb,tran2021optimal}, open quantum systems \cite{poulin2010lieb,nachtergaele2011lieb,mobus2023dissipation,breteaux2024light}, bosonic lattice gases \cite{nachtergaele2009lieb,schuch2011information,woods2015simulating,wang2020tightening,yin2022finite,faupin2022lieb,lemm2023information,lemm2023microscopic,kuwahara2024effective,kuwahara2024enhanced,lemm2024enhanced,lemm2024local}, and continuum systems \cite{gebert2020lieb,hinrichs2024lieb,bachmann2024lieb}.
	Improved LRB establishing finer control (e.g., slow transport for disordered systems) have also been proved \cite{
		hamza2012dynamical,damanik2014new,gebert2016polynomial,baldwin2023disordered,toniolo2024stability,elgart2024slow,yin2024low}.
	LRBs have also been observed experimentally \cite{cheneau2012light,them2014towards,richerme2014non,cheneau2022experimental}, e.g., with ultra-cold atoms in optical lattices. For a comprehensive review of progress up to 2023, see \cite{chen2023speed}.

	Ordinarily, it is considered a strength of the standard LRB \eqref{eq:LRB} that it is independent of system size, making it well-suited for analyzing quantum dynamics on microscopic scales, where all relevant parameters are $\mathcal O(1)$. 
	However, many physically relevant problems concern the collective transport of  \textit{macroscopic numbers of quantum particles}, starting with Ohm's law and ranging to the separation of timescales that is the basis of quantum hydrodynamics \cite{wyatt2005quantum,lucas2015hydrodynamic,lucas2018hydrodynamics} and prethermalization phenomena \cite{gring2012relaxation,mori2018thermalization,mallayya2019prethermalization}.   Controlling macroscopic particle transport poses unique challenges --- particularly in bosonic systems \cite{hamazaki2022speed,faupin2022maximal,lemm2023information,van2023topological,van2024optimal,li2025macroscopic} which can exhibit large local particles numbers even within regions of $\mathcal O(1)$ size.

	In this Letter, we establish a new type of dynamical bound on the transport of macroscopic particle clusters in strongly interacting quantum lattice systems. Specifically, we show that such transport is   suppressed by an exceptionally rapid and macroscopically large decay rate outside of a light cone: the bound takes the form
	\begin{equation}\label{eq:our}
		\exp(CN(v t-r)),
	\end{equation}
	where $N$ is the total particle number. Figure \ref{fig:lccompare} compares the standard light cone  to the new light cone given by \eqref{eq:our}. What sets the latter apart is the $N$-factor in the exponent in  \eqref{eq:our}. Consequently, the exponential decay rate outside of the light cone $r>vt$  grows  {extensively} with the system size $N$.	We refer to the bound \eqref{eq:our} as a manifestation of a new quantum-dynamical large deviation principle: \textit{macroscopic suppression of supersonic macroscopic   transport} (MASSMAT). It  significantly strengthens prior bounds on the macroscopic particle transport problem  \cite{hamazaki2022speed,faupin2022maximal,lemm2023information,van2023topological,van2024optimal,li2025macroscopic} for a broad class of quantum many-body Hamiltonians with short-ranged hopping.
	
	A key conceptual consequence of MASSMAT is that the effective light cone $r=vt$ established by \eqref{eq:our} becomes extremely sharp already for moderate $N$-values and mathematically exact (meaning free from errors) in the thermodynamic limit $N\to\infty$ --- all while $r$ and $t$ are held fixed.   This stands in contrast to the standard Lieb-Robinson light cone of the form \eqref{eq:LRB}, which is rougher because it allows for $\mathcal O(1)$ leakage. 
	Hence, MASSMAT establishes that the transport of  macroscopic particle cluster is universally governed by an unforeseen \textit{emergent strict causality}. The well-known analogy connecting LRBs and special relativity through the shared concept of the light cone is thus shown to become an exact correspondence for thermodynamically large particle clusters thanks to MASSMAT. The precise, rigorous statement is given in Theorem \ref{thm:MVB} below.
	
	Our proof of MASSMAT applies broadly to many strongly interacting quantum lattice gases,  including the Fermi-Hubbard and Bose-Hubbard Hamiltonians with short-range hopping \footnote{In fact, the lattice structure is not essential either: what we use abstractly is the short-range interaction and bounded group velocity of the kinetic operator. These requirements also hold, e.g., for semi-relativistic electrons in the continuum which are described by the kinetic operator $\sqrt{-\Delta+m^2}$.}.  Therefore, MASSMAT is a \textit{universal} dynamical principle that places unforeseen constraints on quantum many-body systems out of equilibrium.

	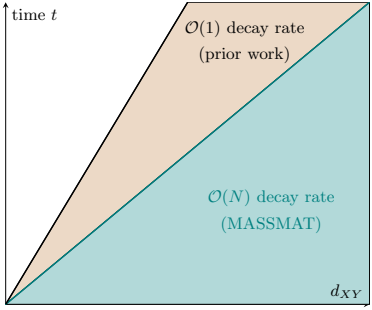
\begin{figure}
		\resizebox{5cm}{!}{
			\begin{tikzpicture}
				\begin{axis}[
					axis lines = middle,
					xlabel={$d_{XY}$}, ylabel={time $t$},
					xmin=0, xmax=5, ymin=0, ymax=5,
					xtick=\empty, 
					ytick=\empty,
					legend pos=north east
					]

					\addplot[fill=brown!30, opacity=0.5] 
					coordinates {(0,0) 
						(1,1) (2,2) (2.5,2.5) (3,3) (3.5,3.5) (4,4) (4.5,4.5) (5,5) 
						(5,5) (4.5,5) (4,5) (3.5,5) (3,5) (2.5,5) (2,4) (1,2) (0,0)} --cycle;
					
					\addplot[fill=green!30, opacity=0.5] 
					coordinates {(0,0) 
						(1,1) (2,2) (2.5,2.5) (3,3) (3.5,3.5) (4,4) (4.5,4.5) (5,5) 
						(5,0) (0,0)} --cycle;
					\addplot[thick, green] {x};
					
					\addplot[thick, , domain=0:2.5] {2*x};
					
				\end{axis}
				\node[] at (4.5,5.2) {$\mathcal{O}(1)$ decay rate};
				\node[] at (4.5,4.7) {(prior work)};
				\node[green] at (5,2) {$\mathcal{O}(N)$ decay rate};
				\node[green] at (5,1.5) {(MASSMAT)};
			\end{tikzpicture}
		}
		\caption{Our main result establishes the green light cone $\sim v t$, with $v$ given by \eqref{SR}, outside of which the exponential decay rate becomes $\propto N$, i.e., extensive. Since $v$ is larger than the quantity $\kappa$ in \eqref{eq:oldkappa} that bounded the speed of macroscopic clusters in prior work \cite{faupin2022maximal,lemm2023information}, there is a separation of the new macroscopic MASSMAT light cone and the usual $\mathcal O(1)$ light cone (yellow region).  Note that we establish the MASSMAT light  cone for short-ranged hopping terms, whereas \cite{hamazaki2022speed,faupin2022maximal,lemm2023information,van2023topological,van2024optimal,li2025macroscopic}  considered long-ranged hopping terms as well.}
		\label{fig:lccompare}
	\end{figure}
	
	
	

	\textit{Example: non-interacting chain.} For illustration, we present a simple situation where MASSMAT obviously holds, while the decay provided by the LRB \eqref{eq:LRB} is far too pessimistic. Consider the dynamics of a chain of free (i.e., non-interacting) bosons with only nearest-neighbor hopping, i.e., the Hamiltonian
	\begin{equation}\label{eq:Hfree}
		H_{\mathrm{free}}=\sum_{x=1}^{L-1} (a_x^\dagger a_{x+1}+a_{x+1}^\dagger a_{x}),
	\end{equation}
	where $\{a_x^\dagger,a_x\}_{x\in\Lambda}$ are the bosonic creation and annihilation operators.	For simplicity, consider the initial state where all particles are localized at site $1$, i.e., $\psi_0=(a^\dagger_1)^N\Omega$ where $\Omega$ is the vacuum. We capture  macroscopic transport via the projection $P_{N_{\{r,\ldots,L\}}\geq \theta N}$ onto the eigenspaces of $N_{\{r,\ldots,L\}}=\sum_{x=r}^L n_x$ (the number of particles sitting on sites $\{r,\ldots,L\}$)  with eigenvalues at least $\theta N$, $0<\theta<1$.  Using that the particles are non-interacting, one easily finds that for all $t,r\geq 1$ (see \cite{SM} for the details)
	\begin{equation}\label{eq:MASSMATfree}
		\langle \psi_t| P_{N_{\{r,\ldots,L\}}\geq \theta N}
		|\psi_t\rangle
		\leq  e^{\theta N C(v t-r)},
	\end{equation}
	which proves that the MASSMAT principle holds for the non-interacting chain. 
	
	For non-interacting particles, the power $N$ in \eqref{eq:MASSMATfree} arises because the particles are statistically independent. Of course, this argument breaks down completely for strongly interacting particles. Surprisingly, as we show, MASSMAT holds nonetheless. 
	We are able to achieve this by devising a new way of deriving many-body propagation bounds that we coin  \textit{geometric exponential tilting}, which is different from prior approaches to bounding macroscopic transport \cite{hamazaki2022speed,faupin2022maximal,lemm2023information,van2023topological,van2024optimal,li2025macroscopic}. We explain the core idea of geometric exponential tilting after we present the main result.

	\textit{Setup and main result.}
	We consider a finite graph $(\Lambda,\mathcal E_\Lambda)$ with vertex set $\Lambda\subset \mathbb{R}^D,\,D\ge1$, such that nearest-neighbors all have Euclidean distance $=1$.
	
	We consider a system of indistinguishable quantum particles living on the graph $\Lambda$ --- the result and its proof are identical for fermions and bosons and so we treat both cases in parallel. We take $\{a_x^\dagger,a_x\}_{x\in\Lambda}$ to be a collection of fermionic/bosonic creation and annihilation operators satisfying the usual canonical anticommutation/commutation relations.

	\textit{The model.}
	On the fermionic/bosonic Fock space over the one-body Hilbert space $\ell^2(\Lambda)$, we consider Hamiltonians of the form 
	\begin{align}
		\label{BH}
		{H = H_0  + V(\{n_x\}_{x\in\Lambda}),\quad H_0=\sum_{x,y \in \Lambda} J_{xy} a_x^\dagger a_y}.
	\end{align}
	Here, $J_{xy}$ represents short-range particle hopping and $V(\{n_x\}_{x\in\Lambda})$ represents a general density-density interaction. 
	
	Our assumptions on the Hamiltonian \eqref{BH} are as follows:

	\begin{enumerate}[label=(\roman*)]
		\item \label{C1}  The hopping matrix is Hermitian, i.e., $J_{xy}=\bar J_{yx}$ for $x,\,y\in\Lam$, and satisfies, for a parameter $a>0$, the \textit{short-range condition} that
		\begin{align}
			\label{SR}
			v=\max_{x\in\Lambda}\sum_{y\in\Lambda}|J_{xy}| \frac{\sinh(a|x-y|)}{a}
		\end{align} 
		is bounded independently of $|\Lambda|$.
		\item \label{C2}   {$V: \{0,1,2,\ldots\}^{|\Lambda|}\to \mathbb R$ is a real-valued function of  $|\Lambda|$ variables.}
	\end{enumerate}
	Under these assumptions, $H$ is a self-adjoint operator on the Fock space $\mathcal{F}(\ell^2(\Lambda))$; see, e.g., \cite{faupin2022lieb}.
	
	The class of Hamiltonians of the form \eqref{BH}  satisfying these assumptions is very broad. In particular, it includes the paradigmatic Fermi-Hubbard and Bose-Hubbard Hamiltonians. We call Condition \ref{C1} the short-range condition because 
	\[
	\sinh(a|x-y|)\sim \frac{1}{2}\exp(a|x-y|), \quad |x-y|\gg 1
	\]
	and so $v$ in \eqref{SR} is bounded independently of $\Lambda$ precisely when the hopping matrix $J_{xy}$ decays exponentially at large distances $|x-y|$. (We use the $\sinh$ in \eqref{SR} instead of the exponential because it gives the asymptotically sharp value of $v$ in the limit $a\to 0$, as we explain after the main result.)
	In particular, Condition \ref{C1} holds for the physically most important case of nearest-neighbor hopping on the integer lattice $\Lam\subset\Zb^D$,   i.e.,
	\[
	J_{xy}=J\delta_{|x-y|=1},
	\]
	in which case the the short-range condition \eqref{SR} holds for any  $a>0$ with $v =2DJ \tfrac{\sinh a}{a}$. 
	
	Assumption  \ref{C2} on $V$ is extremely weak. In particular, long-range and $k$-body interactions for any $k$ are allowed as long as they are of density-density type. Typical examples of $V(\{n_x\}_{x\in\Lambda})$ are polynomials. E.g., for the paradigmatic Bose-Hubbard Hamiltonian, one has
	\[
	V(\{n_x\}_{x\in\Lambda})=\sum_{x\in\Lambda} \left( n_x(n_x-1)-\mu n_x\right).
	\]
	We remark that it is easy to include local spin degrees of freedom in our setup and we only refrain from doing so to keep the notation simple. Spin degrees of freedom appear, e.g., in the standard Fermi-Hubbard Hamiltonian.
	We can also treat time-dependent Hamiltonians, i.e., $J_{xy}=J_{xy}(t)$ and $V=V(t)$. In this case, we simply require that Assumptions \ref{C1} and \ref{C2} hold uniformly in $t$. 
	
	\textit{The main result.}
	Since the Hamiltonian \eqref{BH} preserves the total particle number $N_\Lambda =\sum_{x\in\Lambda}n_x$, we henceforth work on a fixed eigenspace of $N_\Lambda=N$ for a fixed $N\geq 1$.	
	To state our main result, we introduce some notation.
	For any subset $S\subset \Lam$, we define  
	\begin{equation} \label{eq:Nlocdefn}
		N_S=\sum_{x\in S}n_x,\qquad \bar N_S=\frac{N_S}{N}.
	\end{equation}
	For $0\le c\le 1 $, we write  $P_{\bar N_{S}\geq c} $ for the associated spectral projector of $\bar N_S$. 
	Finally, given two subsets of the lattice $X,Y\subset \Lam$, we write $d_{XY}$ for their Euclidean distance. 
	
	Our main result is the following:

	\begin{theorem}[MASSMAT principle]\label{thm:MVB}
		Consider a Hamiltonian $H$ of the form \eqref{BH} satisfying Assumptions \ref{C1}--\ref{C2} with $v,\,a>0$. 
		
		Then, for any $0\le \alpha< \beta\le1$,  and any disjoint subsets $X,\,Y\subset \Lam$,   the following estimate holds on each $N$-particle sector:
		\begin{equation} \label{MVB} 
			\|P_{ \bar N_X\ge \beta}\,e^{-\mathrm{i} tH}P_{ \bar N_{Y}\ge 1-\alpha}  \|\leq \,  e^{-a N((\beta-\alpha)d_{XY}- v|t|)}.
		\end{equation} 
	\end{theorem}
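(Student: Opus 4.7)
The plan is to adapt the standard Lieb--Robinson exponential-conjugation argument, with the crucial twist that the weight is \emph{second-quantized}, so that each of the $N$ particles contributes its own factor to the bound. Choose a $1$-Lipschitz ``geometric'' function $\eta:\Lam\to[0,d_{XY}]$ with $\eta\equiv 0$ on $Y$ and $\eta\equiv d_{XY}$ on $X$ (e.g.\ $\eta(x)=\min\{\dist(x,Y),d_{XY}\}$) and set $F:=\sum_{x\in\Lam}\eta(x)n_x$. Both $e^{\pm aF}$ and the projections $P_{\bar N_X\geq\beta}$, $P_{\bar N_Y\geq 1-\alpha}$ are joint functions of $\{n_x\}$ and hence commute. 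The insertion identity $e^{-itH}=e^{-aF}U_a(t)e^{aF}$, with $U_a(t):=e^{aF}e^{-itH}e^{-aF}$, then reduces \eqref{MVB} to three ingredients: (a) $\|P_{\bar N_X\geq\beta}e^{-aF}\|\leq e^{-a\beta N d_{XY}}$; (b) $\|e^{aF}P_{\bar N_Y\geq 1-\alpha}\|\leq e^{a\alpha N d_{XY}}$; and (c) the Gr\"onwall-type bound $\|U_a(t)\|\leq e^{avN|t|}$ on the $N$-sector.

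Ingredients (a) and (b) follow from the functional calculus and the design of $\eta$: on $\{\bar N_X\geq\beta\}$ one has $F\geq d_{XY}N_X\geq\beta d_{XY}N$, whereas on $\{\bar N_Y\geq 1-\alpha\}$ one has $F=\sum_{x\in Y^\Cp}\eta(x)n_x\leq d_{XY}N_{Y^\Cp}\leq\alpha d_{XY}N$.

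The heart of the matter is (c). Since $V$ commutes with $F$ and $[F,a_x^*]=\eta(x)a_x^*$, conjugation affects only the hopping, giving $\tilde H:=e^{aF}He^{-aF}=H+\sum_{x,y}J_{xy}(e^{a(\eta(x)-\eta(y))}-1)a_x^*a_y$. The standard energy identity $\partial_t\|U_a(t)\psi\|^2=2\langle U_a\psi,(\Im\tilde H)\,U_a\psi\rangle$, together with the computation
\begin{equation*}
\Im\tilde H:=\tfrac{\tilde H-\tilde H^*}{2i}=\tfrac{1}{i}\sum_{x,y}J_{xy}\sinh(a(\eta(x)-\eta(y)))\,a_x^*a_y,
\end{equation*}
leaves us to bound this self-adjoint form on the $N$-sector. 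Applying Cauchy--Schwarz to $\langle\psi,a_x^*a_y\psi\rangle=\langle a_x\psi,a_y\psi\rangle$ together with $\|a_x\psi\|^2=\langle\psi,n_x\psi\rangle$ (valid for both statistics) yields
\begin{equation*}
|\langle\psi,(\Im\tilde H)\psi\rangle|\leq\sum_{x,y}|J_{xy}|\,|\sinh(a(\eta(x)-\eta(y)))|\,\sqrt{\langle n_x\rangle\langle n_y\rangle}.
\end{equation*}
Since $\eta$ is $1$-Lipschitz, $|\sinh(a(\eta(x)-\eta(y)))|\leq\sinh(a|x-y|)$; the AM--GM bound $\sqrt{\langle n_x\rangle\langle n_y\rangle}\leq\tfrac{1}{2}(\langle n_x\rangle+\langle n_y\rangle)$ combined with the symmetry $|J_{xy}|=|J_{yx}|$ collapses the double sum to $\max_x\sum_y|J_{xy}|\sinh(a|x-y|)\cdot\sum_x\langle n_x\rangle = avN$ via \eqref{SR}. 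Gr\"onwall now gives (c), and composing (a)--(c) produces \eqref{MVB}.

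The main obstacle is ingredient (c): one must extract the macroscopic factor of $N$ through a quadratic-form bound on $\Im\tilde H$, and two features are decisive. First, the tilt $F$ must be second-quantized so that each of the $N$ particles acquires an independent weight---this is the ``geometric'' per-particle tilting that distinguishes MASSMAT from the standard one-body conjugation in classical LRBs. Second, the hopping must be short-ranged in the $\sinh$-weighted sense of \eqref{SR}, so that the factor $\sinh(a|x-y|)$ arising from conjugation is summable with the sharp constant $av$. Together these two features upgrade the conventional $\mathcal O(1)$ exponential decay rate to the macroscopic $\mathcal O(N)$ rate claimed in \eqref{MVB}.
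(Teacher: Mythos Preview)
Your proof is correct and follows essentially the same strategy as the paper's (second-quantized exponential tilting, commutativity of $V$ with the tilt, and a Gr\"onwall bound on the deformed propagator via the $\sinh$ estimate and the Schur/AM--GM argument). Your choice of weight $\eta(x)=\min\{\dist(x,Y),d_{XY}\}$ is in fact slightly simpler than the paper's construction via a separating hypersurface and truncated signed-distance function, but the mechanism and resulting constants are identical.
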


	This theorem is proved in \cite{SM}.
	The bound \eqref{MVB} implies a  strong light cone estimate on the quantum probability that a macroscopic cluster comprised of $(\beta-\alpha)N$ particles traverses the distance $d_{XY}$ in time $t$. Indeed, consider two disjoint regions $X,Y\subset \Lambda$ and an initial $N$-particle density operator $\rho_0$ that has at least $(1-\alpha) N$ particles in $Y$ and thus satisfies $\Tr(P_{ \bar N_{Y}\ge 1-\alpha}\rho_0)=1$. Consequently, there are at most $\alpha N< \beta N$ particles in $X$ initially and so  $\Tr(P_{ \bar N_X\ge \beta}\rho_0)=0$.  
	We denote the time-evolved state by $\rho_t=e^{-\mathrm{i}tH}\rho_0e^{\mathrm{i}tH}$. Then $\Tr(P_{ \bar N_X\ge \beta}\rho_t)$ is the probability that after time $t$, at least $(\beta-\alpha)N$ particles are transported from region $Y$ to the region of interest $X$. Thanks to \eqref{MVB}, this probability is bounded by
	\begin{align}
		&\Tr(P_{ \bar N_X\ge \beta}\rho_t)	\notag\\=&\Tr(P_{ \bar N_X\ge \beta} e^{-\mathrm{i} tH}P_{ \bar N_{Y}\ge 1-\al}\rho_0P_{ \bar N_{Y}\ge 1-\al}e^{\mathrm{i} tH}P_{ \bar N_X\ge \beta})\notag\\
		\le&\|P_{ \bar N_X\ge \beta} e^{-\mathrm{i} tH}P_{ \bar N_{Y}\ge 1-\al}\|^2\Tr(\rho_0)\notag\\
		\leq& e^{-2a N((\beta-\alpha)d_{XY}- v|t|)}.\label{e9}\tag{*}
	\end{align}
	
	In words, \eqref{MVB} implies that a macroscopic cluster of $(\beta-\al) N$ particles move at most at speed $\frac{v}{\beta-\al}$, up to errors that are exponentially small in $N$ and thus effectively completely negligible.

	A few remarks about the bound \eqref{MVB} are in order. First, since it is an operator norm bound, it provides state-independent constants. 
	Second, as mentioned above, the propagation speed (i.e., the slope of the MASSMAT light cone) is given by  $\frac{v}{\beta-\alpha}$. The constant $v$ is related to previous velocity bounds on particle transport \cite{faupin2022maximal,lemm2023information} as follows. Since $\frac{\sinh z}{z}\geq 1$ for $z\geq0$, we have that Assumption \eqref{SR} implies
	\begin{equation}\label{eq:oldkappa}
		\kappa=\max_{x\in\Lambda}\sum_{y\in\Lambda}|J_{xy}||x-y| < \k.
	\end{equation}
	This $\kappa$ is exactly the first moment of the hopping matrix which was used to bound the propagation speed in our prior works \cite{faupin2022maximal,lemm2023information}.
	Thus \eqref{eq:oldkappa} shows that the maximal velocity / light cone slope $v$ for our light cone here is slightly larger than the slope $\kappa$ obtained in \cite{faupin2022maximal,lemm2023information}. This shows that the macroscopic decay rate outside of the MASSMAT light cone (which has slope $v$) sets on slightly later than the standard $\mathcal O(1)$ decay (compare \eqref{eq:LRB}) that was proved in \cite{faupin2022maximal,lemm2023information}. This is shown in \figref{fig:lccompare}.
	In fact, our choice of $v$ in Condition \ref{C1} is sharp in the limit of arbitrarily slow decay $a\to 0$: Using $\sinh(a|x-y|)\sim a|x-y|$, we see that $v$ in \eqref{SR} converges to $\kappa$ as $a\to 0$.
	Compared to \cite{faupin2022maximal,lemm2023information}, we see that by slightly increasing the light cone slope, we are able to boost the microscopic error estimate outside of the light cone to an unprecedented, macroscopic one.

    For finite-range hopping, Assumption \ref{C2} holds for any $a>0$, and so it is possible to optimize the choice of $a$ depending on the other parameters. Consider, e.g., nearest neighbor hopping on an integer lattice, i.e,  $J_{xy}=J\delta_{|x-y|=1}$. Then, the minimizer  is $a_*= \cosh^{-1}\left(\frac{(\beta-\al) d_{XY}}{2DJ\abs{t}}\right)$ and it yields an improved bound of the form $\left(\frac{\abs{t}}{(\beta-\alpha)d_{XY}}\right)^{N(\beta-\alpha)d_{XY}}$. This is   a MASSMAT strengthening of the refined LRB of the form $( t/d)^d$ which  recently played a crucial role in achieving refined control over dynamical entanglement generation \cite{toniolo2024dynamical}. 
    
		  Incorporating physical units in  our theorem amounts to replacing  $J\to \frac{J}{\hbar}$ and $d_{XY}\to \ell r_0$ with $r_0$ the lattice spacing and $\ell$ an $\mathcal O(1)$ number.
        Let us consider a typical 1D optical lattice experiment realizing  the Bose-Hubbard Hamiltonian, e.g., \cite{cheneau2012light,cheneau2022experimental}, which features $N=18$ atoms with an effective hopping amplitude $J/\hbar \approx 500  \mathrm{s}^{-1}$ between neighboring lattice sites that are spaced $r_0\approx 500 \mathrm{nm}$ apart, observed up to time $t_{\max} \approx 3 \hbar/J$. We aim to bound, say, the quantum probability that $1/3$ of the $N=18$ particles are transported across $\ell$ lattice sites in time $t$. We apply our theorem with the dimensionally correct choice $a=1/r_0$ and use $\sinh(1)\leq 6/5$ to obtain  the bound
		\[
		\exp\left(-N\left(\frac{\ell}{3}-\frac{3J}{\hbar }t \right)\right).
		\]
		Experimentally, the interior quantity $\frac{\ell}{3}-\frac{3J}{\hbar }t$ is of order one; e.g., taking $\ell= 6$ and $t=\frac{1}{9}t_{\max}=\frac{1}{3}\hbar/J$, we have $\frac{\ell}{3}-\frac{3J}{\hbar }t=1$. Then the extra factor of $N=18$ improves the probability bound from $e^{-1}\approx 0.37$ to $e^{-18}\approx 1.52 \times 10^{-8}$.

	\textit{Description of the proof method.} Our proof of Theorem \ref{thm:MVB} rests on {an} approach which we call \textit{geometric exponential tilting}. Here, we give a high-level overview and compare the approach to other ones in the literature. The full proof is deferred to \cite{SM}. Geometric exponential tilting is completely different to the approaches used in prior works to bound transport of  macroscopic boson clusters, namely the second-order adiabatic spacetime localization observables (ASTLO) method \cite{faupin2022maximal,lemm2023information} and the optimal transport method \cite{van2024optimal}. 
	The overarching idea of geometric exponential tilting is simple: We introduce a suitably chosen, invertible (but not unitary) many-body similarity transformation $T$, and then we bound the left-hand side of \eqref{MVB} by 
	\begin{equation}\label{eq:proofidea}
		\begin{aligned}
			&\|P_{ \bar N_X\ge \beta}\,e^{-\mathrm{i} tH}P_{ \bar N_{Y}\ge 1-\alpha}  \|\\
			\leq&  \|P_{ \bar N_X\ge \beta} T^{-1}\|\|T e^{-\mathrm{i} tH}T^{-1}\| TP_{ \bar N_{Y}\ge 1-\alpha}  \|.
		\end{aligned}
	\end{equation}
	The first and third norms will produce the spatial decay $e^{-a N(\beta-\alpha)d_{XY}}$. The middle norm of $T e^{-\mathrm{i} tH}T^{-1}$ (which we call the deformed propagator) will produce the growth in time $e^{a Nv|t|}$ for $t\in\Rb$, and so \eqref{MVB} follows.

	The crux, of course, lies in choosing the right similarity transformation $T$. We construct a $T$ that exponentially weights the local particle numbers in a site-dependent way, i.e.,  $T=\exp(\sum_{x} F(x) n_x)$ for a suitable real-valued function $F(x)$. The function $F(x)$ interpolates continuously
	between being $1$ on the region $X$ and $=-1$ on the region $Y$. Thus, $T$ gives large weight to configurations with many particles in $X$ and small weight to configurations with many particles in $Y$. The bound $\|T e^{-\mathrm{i} tH}T^{-1}\| \leq e^{a Nv|t|} $  shows that the exponential weights in $T$ grow at most exponentially in time under the dynamics.    
	
	The geometric exponential tilting method is simultaneously conceptually simple (recall \eqref{eq:proofidea}), flexible (one can adapt the similarity transform $T$ to the problem at hand) and powerful (it is so far the only method that yields MASSMAT). 

	The method has various links to prior works. First, it is inspired by a recent complex analysis argument for deriving transport bounds on non-interacting particles in \cite{SW}; see also \cite{cedzich2024exponential}. The connection to complex analysis arises through Paley-Wiener theory \cite{paley1934fourier}, which in particular says that it is equivalent to have exponential decay in position space and to have an analytic extension of the Fourier transform to a complex strip \footnote{Indeed, 
		Condition \ref{C1} is equivalent to saying that the family of transformed Hamiltonians $H_\xi=T_\xi H T_\xi^{-1}$, where $T_\xi=e^{\dG(\xi\cdot x)}$, has an analytic continuation, $H_\z$, into the strip $\cS_a^n=\Set{\z\in\Cb^n:\abs{\im \z_j}<a\;\forall j}$.
		This analytic continuation breaks the time-reversal symmetry and one takes $\im \z_j>0$ or $\im\z_j<0$ $\forall j$, depending on whether one considers $t>0$ or $t<0$.}.
	From a broader perspective, using suitable similarity transforms with locally varying exponential weights to adapt the geometry to the question at hand has a long history in mathematical physics, perhaps most famously in Witten's proof of the Morse inequalities \cite{witten1982supersymmetry}. 
	In the context of propagation bounds on quantum many-body systems, related uses of spatially varying weights have recently appeared in Yin-Lucas \cite{yin2022finite}, Osborne-Yin-Lucas \cite{osborne2024locality}, and Fresta-Porta-Schlein \cite{fresta2024effective} for different quantum-dynamical problems.

	\textit{Conclusions.}	
	In this work, we have identified and rigorously proven a conceptually novel, universal bound on the  nonequilibrium dynamics of strongly interacting quantum lattice models: the macroscopic suppression of supersonic macroscopic transport (MASSMAT). MASSMAT is an unforeseen dynamical large deviation principle, which establishes that the quantum probability of supersonic propagation of macroscopic particle numbers actually decays exponentially at a macroscopic rate proportional to the total particle number $N$. This is in stark contrast to what one obtains from Lieb-Robinson bounds, which give an $\mathcal O(1)$ decay rate that does not grow with $N$. MASSMAT substantially strengthens the decay rate achieved on macroscopic boson transport in prior works \cite{faupin2022maximal,lemm2023information,van2023topological,van2024optimal,li2025macroscopic}.
	
	The MASSMAT principle is universal in scope: It applies to both bosons and fermions (as well as mixtures) and holds across general geometries.   	
	Our proof is based on a new analytical technique --- geometric exponential tilting --- that is inspired by complex analysis methods from one-body quantum mechanics and developed here for the first time in a many-body context. We anticipate that this method will find broader applications in macroscopic transport problems, especially in regimes characterized by slow transport of large clusters, such as hydrodynamic limits \cite{wyatt2005quantum,lucas2015hydrodynamic,lucas2018hydrodynamics} or prethermalization phenomena \cite{gring2012relaxation,mori2018thermalization,mallayya2019prethermalization}.

	Our work opens several avenues for future exploration. One key question is the experimental observation of MASSMAT, e.g., in ultracold  quantum gases on optical lattices. This requires  observation of particle numbers that are large enough so that the improved decay outside of the MASSMAT light cone becomes observable. Another important avenue is to investigate if the MASSMAT principle extends to systems with long-range hopping, as studied in \cite{faupin2022maximal,lemm2023information,van2023topological,van2024optimal,li2025macroscopic}.

	\section*{Data availability}
	Data sharing is not applicable to this article as no datasets were generated or analyzed during the current study.
	
	\section*{Acknowledgments} The authors thank Ryusuke Hamazaki, Tomotaka Kuwahara, and Tan Van Vu for useful comments on a preprint version of the manuscript. The research of J.F. is supported by the ANR, project ANR-22-CE92-0013.
	The research of M.L.\ is supported by the DFG through the grant TRR 352 – Project-ID 470903074 and by the European Union (ERC Starting Grant MathQuantProp, Grant Agreement 101163620). I.M.S.~is supported by NSERC Grant NA7901. 
	J.Z.~is supported by National Key R \& D Program of China Grant 2022YFA100740, China Postdoctoral Science Foundation Grant 2024T170453, National Natural Science Foundation of China Grant 12401602, and the Shuimu Scholar program of Tsinghua University. 
	
	\bibliography{MVBMBwExpDecayBib.bib}

	\widetext
	\pagebreak
	\begin{center}
		\textbf{\large Supplemental Material:\\ Macroscopic suppression of supersonic quantum transport}
		
		\vspace{.5cm}
		
		J\'er\'emy Faupin, Marius Lemm, Israel Michael Sigal, and Jingxuan Zhang
	\end{center}

	\stepcounter{myequation}
	\setcounter{figure}{0}
	\setcounter{table}{0}
	\makeatletter
	\renewcommand{\theequation}{S\arabic{equation}}
	\renewcommand{\thefigure}{S\arabic{figure}}

	This appendix has two parts.  In Part I, we give the short proof of MASSMAT for the special case of a chain of non-interacting bosons \eqref{eq:Hfree}, which was displayed in the main text as \eqref{eq:MASSMATfree}. In Part II, we introduce the geometric exponential tilting method and give the full proof of our main result, Theorem \ref{thm:MVB}.

	\section{I. Direct proof of MASSMAT for non-interacting bosons}\label{app:I}
	In this appendix, we prove that \eqref{eq:MASSMATfree} holds for the non-interacting Hamiltonian \eqref{eq:Hfree} by a short calculation.
	We consider the more general  initial state $\psi_0=(a^\dagger(f))^N\Omega$ where $\Omega$ is the vacuum and $f:\{1,\ldots,L\}\to \mathbb C$ is a one-body wave function which is localized around the origin. 
	Since the particles are non-interacting,
	\[
	\psi_t=e^{-\mathrm{i}tH}\psi_0=(a^\dagger(e^{-\mathrm{i}t\Delta_L}f))^N\Omega
	\]
	and so, 
		\begin{equation}\label{eq:freetime}
			\begin{aligned}				\langle \psi_t| P_{N_{\{r,\ldots,L\}}\geq \theta N}
				|\psi_t\rangle
				=&\sum_{N'=\lceil \theta N\rceil }^N \binom{N}{N'}\left(\sum_{x=r}^L|\langle e^{-\mathrm{i}t\Delta_L}f,\delta_x  \rangle |^{2}\right)^{N'}\left( 1-\sum_{x=r}^L|\langle e^{-\mathrm{i}t\Delta_L}f,\delta_x  \rangle |^{2}\right)^{N-N'}\\
				\le & 2^N \left(\sum_{x=r}^L|\langle e^{-\mathrm{i}t\Delta_L}f,\delta_x  \rangle |^{2}\right)^{\lceil \theta N\rceil},
			\end{aligned}
	\end{equation}
where the last line follows since $\sum_{N'=0}^N\binom{N}{N'}=2^N$.
	For the one-body Laplacian $\Delta_L$, it is easy to check from Fourier theory that $|\langle e^{-\mathrm{i}t\Delta_L}f,\delta_x  \rangle |^2\leq e^{\tilde C(\tilde v t-x)}$ for suitable constants $\tilde C,\tilde v>0$. 
	Therefore, 
		\begin{equation}\label{eq:APP_MASSMATfree}
			\langle \psi_t| P_{N_{\{r,\ldots,L\}}\geq \theta N}
			|\psi_t\rangle
			\leq  e^{\lceil \theta N\rceil C(v' t-r)}.
		\end{equation}
		Here we used that, since we assume $t\geq 1$, various time-independent prefactors including $2^N$ can be absorbed in the velocity $v'$.
	
	The derivation can be adapted to include on-site external fields, i.e., to treat Hamiltonians of the form
	\[
	H_{\mathrm{free}}=\sum_{x=1}^{L-1} (a_x^\dagger a_{x+1}+a_{x+1}^\dagger a_{x}+v_x n_x),
	\]
	with $v_x$ given by a bounded sequence. For this, one uses the one-body propagation bound of the form $|\langle e^{-\mathrm{i}t(\Delta_L+V)}f,\delta_x  \rangle |^2\leq e^{C'(vt-x)}$, which follows, e.g., from \cite{cedzich2024exponential,SW}.


	\section{II. Proof of Theorem \ref{thm:MVB}}\label{app:II}
	
	The proof of \thmref{thm:MVB} is organized as follows.
	
	\begin{itemize}
		\item In Step 1, we render the relative geometry of two disjoint subsets $X$ and $Y$ effectively one-dimensional by constructing a ``separation function'' $s(x)$ that incorporates the relevant geometry. 
		\item In Step 2, we introduce the exponential tilting operator $T$, which involves a similarity transformation that exponentially weighs the local particle numbers in a site-dependent way. The relative geometry between $X$ and $Y$ is fully taken into account through the function $s(x)$ defined in \eqref{sDef1}, resp.\ \eqref{sDef}.
		\item In Step 3, we derive the spatial decay from the first and third terms in \eqref{eq:proofidea}.
		\item In Step 4, we bound the tilted deformed propagator, i.e., the middle term in \eqref{eq:proofidea} and conclude the proof.
	\end{itemize}

	\subsection*{Step 1: Separating functions}\label{sec21}

	\textit{Simple geometry: disjoint convex hulls.} To fix ideas, we first consider the simplified scenario of two subsets $ X,\, Y\subset \Lam$ whose convex hulls, $\tilde X=\conv(  X)$ and $\tilde Y=\conv(  Y)$, are disjoint. With a slight modification of the argument, the results extend to 	{complete general} disjoint subsets; see the construction \eqref{SOdef} and \eqref{sDef}.
	
	Let $x_0\in {\tilde X}$, $y_0\in {\tilde Y}$ be such that $d_{{\tilde X}{\tilde Y}}=|x_0-y_0|$, and introduce the ``center of mass'' coordinates
	\begin{equation*}
		w_0=\frac12(x_0+y_0), \qquad b=\frac{x_0-y_0}{|x_0-y_0|}.
	\end{equation*}
	By construction, the hyperplane $\Set{z\in\mathbb R^D:(z-w_0)\cdot b=0}$ separates ${\tilde X}$ and ${\tilde Y}$ to two different sides. Below we project the relative geometry of ${\tilde X}$ and ${\tilde Y}$ onto the line joining the points $x_0$ and $y_0$.
	
	We introduce the separating function $s:\mathbb R^D\to \mathbb R $, %
	\begin{align}
		\label{sDef1}
		s(x)=b\cdot (x-w_0) 
	\end{align} and define the following subsets of $\Lambda$ (see Figure \ref{figSep})
	\begin{equation}	\label{decomp}
		\begin{aligned}
			&Y^\infty=\big\{x\in\Lambda \, | \, s(x)\le-\frac12d_{{\tilde X}{\tilde Y}}\big\},\\
			&W^0=\big\{x\in\Lambda \, | \, -\frac12d_{{\tilde X}{\tilde Y}}<s(x)<\frac12d_{{\tilde X}{\tilde Y}}\big\},\\
			&X^\infty=\big\{x\in\Lambda \, | \, \frac12d_{{\tilde X}{\tilde Y}}\le s(x)\big\}.
		\end{aligned}
	\end{equation}

	\begin{figure} 
		\centering
		\begin{tikzpicture}[scale=.7]
			\coordinate (X0) at (3,0);
			\coordinate (Y0) at (-3,0);
			\coordinate (O) at (0,0);				
			\filldraw[gray!30] plot[smooth, tension=1] coordinates {(-5,1.6) (-3.8,1.8) (Y0) (-3.6,-1.8) (-5,-2)};
			
			\filldraw[gray!80] plot[smooth, tension=0.8] coordinates {(5,1.6) (3.7,1.5) (X0) (4,-2.2) (5,-2)};
			
			\draw[dashed,thick] (X0) -- ++(0,2.5);
			\draw[dashed,thick] (X0) -- ++(0,-2.5);
			
			\draw[dashed,thick] (Y0) -- ++(0,2.5);
			\draw[dashed,thick] (Y0) -- ++(0,-2.5);
			
			\draw[dashed] (O) -- ++(0,-2.5);				
			\draw[dashed] (O) -- ++(0,1.8);				
			\draw[->, thick] (O) -- node[above] {$\frac12{d_{\tilde X\tilde Y}}$} (X0);
			\draw[->, thick, blue] (O) -- node[below] {$b$} (0.4,0);
			\node[above left] at (O) {$w_0$};
			
			\node[left] at (-3.5,2.2) {$Y^\infty$};
			\node[left] at (0.5,2.2) {$W^0$};
			\node[left] at (4.5,2.2) {$X^\infty$};				
			
			\node[above left] at (Y0) {$y_0$};
			\node[above right] at (X0) {$x_0$};
			\node at (-4, 0) {${\tilde Y}$};
			\node at (4, 0) {${\tilde X}$};
			
		\end{tikzpicture}
		\caption{Schematic diagram for the decomposition \eqref{decomp}. }\label{figSep}
	\end{figure}
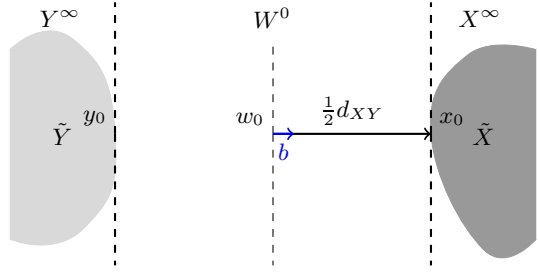

	For the simple geometry, we have the following easy lemma.
	\begin{lemma}\label{lem21}
		Assuming  $X$, $Y$ have disjoint convex hulls, we have ${\tilde X\subset X^\infty}$ and $\tilde Y\subset Y^\infty$.
	\end{lemma}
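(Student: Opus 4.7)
The lemma is a purely geometric statement about the separating hyperplane for two disjoint convex sets, so my plan is to reduce everything to the standard variational inequality for the closest-point map on a convex set, and then extend from the distinguished points $x_0,y_0$ to the full convex hulls by convexity.

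First I would evaluate $s$ at the distinguished points $x_0,y_0$. Using the definitions $w_0=\tfrac12(x_0+y_0)$ and $b=(x_0-y_0)/|x_0-y_0|$, a direct computation gives
\[
s(x_0)=b\cdot(x_0-w_0)=\tfrac{1}{2}|x_0-y_0|=\tfrac{1}{2}d_{\tilde X\tilde Y},\qquad s(y_0)=-\tfrac{1}{2}d_{\tilde X\tilde Y},
\]
so at least $x_0\in X^\infty$ and $y_0\in Y^\infty$. The inclusions in the lemma will then follow if I can show that for every $x\in\tilde X$ one has $s(x)\ge\tfrac{1}{2}d_{\tilde X\tilde Y}$, and symmetrically $s(y)\le-\tfrac{1}{2}d_{\tilde X\tilde Y}$ for every $y\in\tilde Y$.

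The key step is to invoke the standard variational characterization of the nearest point on a convex set: since $x_0$ minimizes $|x-y_0|$ over the convex set $\tilde X$, for every $x\in\tilde X$ the first-order condition reads
\[
(x-x_0)\cdot(x_0-y_0)\ge 0,
\]
which after dividing by $|x_0-y_0|>0$ is exactly $(x-x_0)\cdot b\ge 0$. Combining this with $s(x_0)=\tfrac{1}{2}d_{\tilde X\tilde Y}$ from Step~1 gives $s(x)=s(x_0)+(x-x_0)\cdot b\ge\tfrac{1}{2}d_{\tilde X\tilde Y}$, so $\tilde X\subset X^\infty$. Running the symmetric argument at $y_0$ yields $\tilde Y\subset Y^\infty$. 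Since $X\subset\tilde X$ by definition of the convex hull, this immediately gives the claimed $X\subset X^\infty$ (which I read as the intended meaning of the symbol $\tilde X^\infty$ in the statement).

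There is no real obstacle here; the statement is elementary once the variational inequality is invoked. The only point requiring a little care is the first-order condition itself, which I would either quote from convex analysis or verify in one line by noting that for $x\in\tilde X$ and $t\in[0,1]$ the point $x_0+t(x-x_0)$ also lies in $\tilde X$, so differentiating $t\mapsto|x_0+t(x-x_0)-y_0|^2$ at $t=0$ and using optimality of $x_0$ gives the desired inequality.
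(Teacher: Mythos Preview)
Your proposal is correct and follows essentially the same approach as the paper: both decompose $s(x)=s(x_0)+(x-x_0)\cdot b$, identify $s(x_0)=\tfrac12 d_{\tilde X\tilde Y}$, and then argue that $(x-x_0)\cdot b\ge 0$ for all $x\in\tilde X$. The only cosmetic difference is that the paper obtains this last inequality by citing the separating hyperplane theorem, whereas you derive it directly from the first-order optimality condition for the nearest-point projection onto a convex set; your route is arguably the more self-contained of the two, and your reading of the symbol $\tilde X^\infty$ as $X^\infty$ is correct.
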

	\begin{proof}
		Let $x\in \tilde X$. We have
		\begin{align*}
			s(x)=&b\cdot (x-x_0)+b\cdot(x_0-w_0) \\=&\frac{x_0-y_0}{|x_0-y_0|}\cdot(x-x_0)+\frac12|x_0-y_0|.
		\end{align*}
		The second term equals $\frac12d_{{\tilde X}{\tilde Y}}$ by the choice of $x_0,\,y_0$, while the first term is non-negative by the separating plane theorem for disjoint convex sets \cite{luenberger1997optimization}. 
		This shows that $s(x)\ge\frac12d_{\tilde X\tilde Y}$ and hence that $\tilde X\subset X^\infty$. The proof that $\tilde Y\subset Y^\infty$ is analogous.
	\end{proof}

	\textit{Extension to general geometry.} \label{sec22}
	Consider now arbitrary   disjoint   subsets $X,Y\subset \Lambda$.
	Let $g(x)=\dist_Y(x)-\dist_X(x)$. The separating hyperplane is now replaced by the separating hypersurface
	\begin{align}
		\label{SOdef}
		S=\Set{g(x)=0},\quad  \Om_\pm = \Set{\pm g(x)>0} .
	\end{align} Indeed, the hypersurface $S$ is equidistant to $X$ and $Y$, with $X\subset \Om_+$, $Y\subset \Om_-$; see Figure \ref{figSep'}. 
	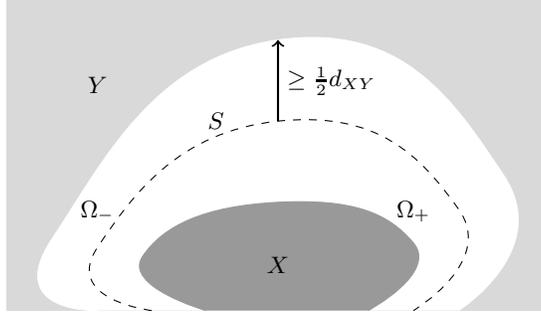
\begin{figure}[H] 
		\centering
		\begin{tikzpicture}[scale=.6]
			\fill[gray!30] (-6,0) rectangle (6,7);		
			\filldraw[white] plot[smooth, tension=1] coordinates {(-4,0) (-5,1.6) (0,6)   (5,3) (4,0) };
			\filldraw[gray!80] plot[smooth, tension=1] coordinates {(-1.6,0) (-3,1.2) (0,2.4)  (3,1.5) (2,0)};
			

			\draw[dashed] plot[smooth, tension=1] coordinates {(-2.8,0) (-4,1.4) (0,4.2) (4,2.25) (3,0)};
			
			\draw[->, thick] (0,4.2) -- node[right] {$\ge \frac12d_{XY}$} (0,6);

			
			
			
			
			\node[left] at (-1,4.2) {$S$};
			
			\node at (-4, 5) {$Y$};
			\node at (0, 1) {$X$};
			
			\node at (3, 2.2) {$\Om_+$};
			\node at (-4, 2.2) {$\Om_-$};
			
		\end{tikzpicture}
		\caption{Schematic diagram for $S$ and $\Om_\pm$. }\label{figSep'}
	\end{figure}

	In this more general case, we take the separating function $s(x)$ to be the signed distance function to $S$ with the sign chosen such that $\pm s(x)> 0$ on $\Om_\pm$.  Explicitly, 
	\begin{align}
		\label{sDef}
		s(x)=\sgn(g(x))\dist_S(x).
	\end{align} 
	(This reduces to \eqref{sDef1} in the simplified scenario considered before.) We note for later reference that this function is $1$-Lipschitz continuous, i.e., 
	\begin{align}
		\label{sLip}
		\abs{s(x)-s(y)}\le \abs{x-y},\quad x,\,y\in\Lam.
	\end{align}
	Indeed, on the same side of $S$, the function $s(x)$ coincides with the distance function up to a sign, which is 1-Lipschitz by the following standard argument. For any $z\in S$ and $x,\,y\in\Lam$, we have  $\dist_S(x)\leq |x-z|\leq |x-y|+|y-z|$ for any $S$. By taking $\inf _{z\in S}$, we obtain $\dist_S(x)\le |x-y|+\dist_S(y)$. If $x,y$ fall on different sides of $S$, then we join them with a line segment passing $S$ at, say, $z$, and then  apply the triangle inequality to $\dist_S(x) \le |x-z|$ to conclude.

	Similarly to \eqref{decomp}, we decompose $\Lam$ with $s(x)$ from \eqref{sDef} as follows:
	\begin{equation}\label{decomp'}	
		\begin{aligned}
			&Y^\infty=\big\{x\in\Lambda \, | \, s(x)\le-\frac12d_{{ X}{ Y}}\big\},\\
			&W^0=\big\{x\in\Lambda \, | \, -\frac12d_{{ X}{ Y}}<s(x)<\frac12d_{{ X}{ Y}}\big\},\\
			&X^\infty=\big\{x\in\Lambda \, | \, \frac12d_{{ X}{ Y}}\le s(x)\big\},
		\end{aligned} 
	\end{equation}
	As in \lemref{lem21}, we have
	\begin{lemma}\label{lemma:XYgeometry}
		We have $X\subset X^\infty$ and $Y\subset Y^\infty$.
	\end{lemma}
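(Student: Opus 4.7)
The plan is to prove the two containments by directly unpacking the definition of the signed distance function $s(x)$ in \eqref{sDef} and invoking only two ingredients: the triangle inequality and the defining property $\dist_X(z)=\dist_Y(z)$ on the equidistant hypersurface $S$.

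First I would handle the containment $X\subset X^\infty$. Pick $x\in X$. Then $\dist_X(x)=0$, while $\dist_Y(x)\ge d_{XY}>0$, so $g(x)=\dist_Y(x)-\dist_X(x)\ge d_{XY}>0$, which already gives $\sgn(g(x))=+1$. It then remains to show the quantitative bound $\dist_S(x)\ge \tfrac12 d_{XY}$. For any $z\in S$, let $\delta:=\dist_X(z)=\dist_Y(z)$. Using $x\in X$ gives $\delta=\dist_X(z)\le|x-z|$. On the other hand, triangle inequality applied to the distance to $Y$ yields $d_{XY}\le \dist_Y(x)\le |x-z|+\dist_Y(z)=|x-z|+\delta\le 2|x-z|$. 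Taking infimum over $z\in S$ produces $\dist_S(x)\ge \tfrac12 d_{XY}$, and combining with $\sgn(g(x))=+1$ gives $s(x)\ge \tfrac12 d_{XY}$, i.e.\ $x\in X^\infty$.

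The containment $Y\subset Y^\infty$ then follows by the exact same argument with the roles of $X$ and $Y$ swapped: for $y\in Y$ one has $g(y)\le -d_{XY}<0$, so $\sgn(g(y))=-1$, and the triangle-inequality bound $|y-z|\ge \tfrac12 d_{XY}$ for any $z\in S$ is proved identically, giving $s(y)=-\dist_S(y)\le -\tfrac12 d_{XY}$ and hence $y\in Y^\infty$.

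The only subtlety I anticipate is confirming that $S$ is non-empty, since otherwise $\dist_S$ is ill-defined; but this is automatic because $g$ is continuous on $\mathbb{R}^D$ with $g>0$ on $X$ and $g<0$ on $Y$, so the intermediate value theorem produces zeros of $g$ on any path joining the two regions. No step here requires any property beyond the triangle inequality and the definition of $S$, so the proof should be short and essentially self-contained, consisting of the two symmetric calculations sketched above.
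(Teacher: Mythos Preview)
Your proof is correct and follows essentially the same line as the paper's: both argue that $x\in X$ forces $s(x)=\dist_S(x)$ and then establish $\dist_S(x)\ge \tfrac12 d_{XY}$ from the equidistance property of $S$. Your version is in fact more explicit, since you spell out the triangle-inequality argument behind the inequality $d_{SX}\ge\tfrac12 d_{XY}$ (which the paper simply asserts) and you also address the non-emptiness of $S$.
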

	\begin{proof}
		Consider any $x\in X$. On the one hand,  we have $s(x)=\dist_S(x)$  by Definition \eqref{sDef} and the fact that  $X\subset \Om_+$. On the other hand, since $S$ is equidistant to $X$ and $Y$, we have $d_{SX}= \frac12 d_{XY}$. 	
		%
		%
		This shows that $s(x)\ge d_{SX}=\frac12d_{XY}$ and hence that $X\subset X^\infty$. The proof of $Y\subset Y^\infty$ is analogous.
	\end{proof}

	\subsection*{Step 2: Exponential tilting operator}
	In this section, we introduce the exponential tilting operator $T$; see \eqref{Tdef'} below. 
	
	For brevity, we fix disjoint $X,\,Y\subset \Lam$ throughout and denote 
	\begin{equation}\label{dmDef}
		d=d_{XY}.
	\end{equation}
	We define the function $f:\mathbb R\to\mathbb R$ by %
	\begin{align}
		\label{fDef}
		f(s)=\1_{s\ge 1/2}-\1_{s\le -1/2} +2s \1_{\abs{s}<1/2}
	\end{align}
	as shown in \figref{fig:f}. (Here, $\1_{\ldots}$ is the indicator function which equals $1$ if condition $\ldots$ is satisfied and which equals zero otherwise.)
	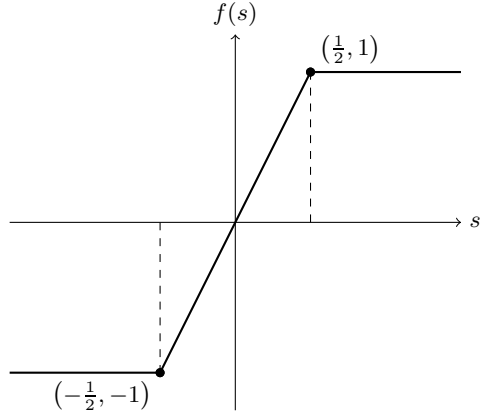
\begin{figure}
		\centering
		\begin{tikzpicture}[scale=2]
			\draw[->] (-1.5, 0) -- (1.5, 0) node[right] {$s$};
			\draw[->] (0, -1.25) -- (0, 1.25) node[above] {$f(s)$};
			
			\draw[thick] (-1.5, -1) -- (-0.5, -1);
			\fill(-0.5, -1) circle (0.03);
			
			\draw[thick] (-0.5, -1) -- (0.5, 1);
			\fill (-0.5, -1) circle (0.03);
			\fill (0.5, 1) circle (0.03);
			
			\draw[thick] (0.5, 1) -- (1.5, 1);
			\fill (0.5, 1) circle (0.03);
			
			\node[below left] at (-0.5, -1) {$\left(-\frac{1}{2}, -1\right)$};
			\node[above right] at (0.5, 1) {$\left(\frac{1}{2}, 1\right)$};
			
			\draw[dashed] (-0.5, 0) -- (-0.5, -1);
			\draw[dashed] (0.5, 0) -- (0.5, 1);
		\end{tikzpicture}
		\caption{The function $f(s)$. } \label{fig:f}
	\end{figure}	
	
	We use the function $f$ to truncate the signed distance function $s(x)$ in \eqref{sDef} to $f(\frac{s(x)}{d})$, with
	$d>0$ as in \eqref{dmDef}.  Notice that $f(\frac{s(x)}{d})= 1$ on $X$ and $f(\frac{s(x)}{d})=-1$ on $Y$; hence $f$ only acts as a distance in the white region in \figref{figSep'}.
	For this truncated distance function, we introduce, for all $x\in\Lambda$, 
	\begin{align}\label{qDef}
		q_\mu(x)=&\mathrm{exp}\Big(\mu f(\tfrac{s(x)}{d})\Big) >0,\\
		\label{eq:mudef}
		\mu=&\frac{da}{2},
	\end{align}  
	where the constant $a$ comes from the short range condition \eqref{SR}.  
	As usual, $q_\mu$ is identified with the corresponding multiplication operator.

	To lift $q_\mu$ to the Fock space $\mathcal{F}(\ell^2(\Lambda))$, it is convenient to introduce the following standard notation for the second quantization functor; see, e.g., \cite[p.~8]{BR} and \cite{reed1978iv}. 
	
	\begin{definition}[Second quantization functor]
		Given a one-body operator $A=(A_{xy})_{x,y\in\Lambda}$, we set
		\[
		\dG(A)=\sum_{x,y\in\Lambda} A_{xy} a_x^\dagger a_y,
		\]
		 {and we set  
			\[
			\Ga(e^A)=\exp(\dG(A)).
			\]}
	\end{definition}

	Note the following special case of this definition: If $A_{xy}=q(x)\delta_{xy}$ is a multiplication operator by a function $q(x)$, then
	\[
	\dG(q)=\sum_{x\in\Lambda} q(x) n_x.
	\]
	
	The reason for introducing the second quantization functor is that it allows to state various algebraic properties succinctly. Indeed, we will use the following properties which follow directly from the CAR/CCR \cite{reed1978iv,BR}.
	
	\begin{proposition}[Properties of $\dG$ and $\Gamma$]\label{proposition:Gammaproperties}\mbox{}
		
		\begin{itemize}
			\item[(i)] If $A\leq B$, then $\dG(A)\leq \dG(B)$. In particular, on each $N$-particle sector, we have $\dG(A)\leq \|A\| N$.
			\item[(ii)] For any function $q:\Lambda\to \mathbb C$ and any $x\in\Lam$, we have the pull-through formulas
			\begin{equation} \label{eq:commut0}
				\Gamma(q)a^\dagger _x=q(x)a^\dagger _x\Gamma(q), \quad a_x \Gamma(q) = \Gamma(q) \bar{q}(x) a_x.
			\end{equation}
		\end{itemize}
		
	\end{proposition}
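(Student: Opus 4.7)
The plan is to handle the two parts independently, each by a short direct computation that reflects the tensorial/multiplicative structure of second quantization. For (i), I would diagonalize the one-body operator and rewrite $\dG$ as a sum of mode-number operators, which is manifestly nonnegative. For (ii), I would compute the elementary commutator $[n_y,a_x^*]=\delta_{xy}a_x^*$, iterate it into a formula for $(\dG(q))^k a_x^*$, and then exponentiate.

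For (i), the starting observation is linearity of $\dG$ in its argument, which gives $\dG(B)-\dG(A)=\dG(B-A)$. It thus suffices to show $\dG(C)\ge 0$ whenever $C\ge 0$ as a one-body operator. Since $\Lam$ is finite, $C$ admits a spectral decomposition $C=\sum_i\lambda_i|u_i\rangle\langle u_i|$ with $\lambda_i\ge 0$. Introducing the rotated mode operators $b_i=\sum_x\overline{u_i(x)}a_x$ and $b_i^*=\sum_x u_i(x)a_x^*$, which obey the same CAR/CCR as $(a_x,a_x^*)$ thanks to orthonormality of $\{u_i\}$, a short rearrangement gives $\dG(C)=\sum_i\lambda_i b_i^*b_i$. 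Each $b_i^*b_i$ is a mode-number operator, hence nonnegative, so $\dG(C)\ge 0$. The ``In particular'' bound then follows by applying this monotonicity with $B=\|A\|I_\Lam$, where $I_\Lam$ is the identity on $\ell^2(\Lam)$: this gives $\dG(A)\le\|A\|\dG(I_\Lam)=\|A\|N_\Lam$, which equals $\|A\|N$ on the $N$-particle sector.

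For (ii), the first step is the elementary commutator $[n_y,a_x^*]=\delta_{xy}a_x^*$, which I would verify directly from $n_y=a_y^*a_y$ together with the CCR for bosons or the CAR for fermions; the two cases give the same final result. Linearity in $q$ then yields $[\dG(q),a_x^*]=q(x)a_x^*$, i.e., $\dG(q)a_x^*=a_x^*(\dG(q)+q(x))$. Since the scalar $q(x)$ commutes with $\dG(q)$, an easy induction gives $(\dG(q))^k a_x^*=a_x^*(\dG(q)+q(x))^k$, and summing against $1/k!$ on each $N$-particle sector (where everything is finite-dimensional and there are no convergence issues) yields the first pull-through formula $\Gamma(q)a_x^*=q(x)a_x^*\Gamma(q)$, consistent with the multiplicative second-quantization convention $\Gamma(q)=\bigoplus_N q^{\otimes N}$ acting as $\prod_x q(x)^{n_x}$. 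The second pull-through formula then follows by taking the Hermitian adjoint of the first (applied with $\bar q$ in place of $q$) and using $\Gamma(q)^*=\Gamma(\bar q)$ together with the fact that the scalar $\bar q(x)$ commutes with everything.

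Neither part presents a serious obstacle: all computations reduce to linear algebra on the finite-dimensional $N$-particle sectors, so domain issues are absent. The only care required is to run the commutator computation $[n_y,a_x^*]=\delta_{xy}a_x^*$ uniformly for bosons and fermions (the $\pm$ signs from CCR versus CAR cancel when assembled into the number operator), and to keep track of the convention relating the additive exponent $\dG(q)$ to the multiplicative action of $\Gamma(q)$ when exponentiating the iterated commutator.
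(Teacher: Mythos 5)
Part (i) of your proposal is correct and complete: linearity reduces monotonicity to $\dG(C)\ge 0$ for $C\ge0$, the rotation to the eigenbasis $b_i=\sum_x\overline{u_i(x)}a_x$ preserves the CAR/CCR, and $\dG(C)=\sum_i\lambda_i b_i^*b_i\ge0$, after which $A\le\|A\|\1$ gives $\dG(A)\le\|A\|N_\Lambda$. (The paper itself offers no proof of the proposition --- it is quoted as a standard consequence of the CAR/CCR --- so your argument here is a perfectly adequate substitute.) In part (ii), however, your final step does not deliver the formula you claim. From $[\dG(q),a_x^*]=q(x)a_x^*$ you correctly get $(\dG(q))^k a_x^*=a_x^*(\dG(q)+q(x))^k$, but summing against $1/k!$ then yields
\begin{equation*}
e^{\dG(q)}a_x^*=a_x^*\,e^{\dG(q)+q(x)}=e^{q(x)}\,a_x^*\,e^{\dG(q)},
\end{equation*}
i.e.\ the pull-through factor is $e^{q(x)}$, not $q(x)$. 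So if one reads $\Gamma(q)=\exp(\dG(q))$ --- which is the definition as literally printed in the paper --- your computation proves a different identity than \eqref{eq:commut0}, and the appeal to ``consistency with the multiplicative convention'' papers over the discrepancy rather than resolving it.

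The repair is short but should be made explicit: interpret $\Gamma(q)$ as the multiplicative second-quantization functor, $\Gamma(q)=q^{\otimes N}$ on the $N$-particle sector, equivalently $\Gamma(q)=\exp(\dG(\log q))=\prod_x q(x)^{n_x}$ for $q>0$. Then either run your commutator argument with $\log q$ in the exponent (the factor becomes $e^{\log q(x)}=q(x)$), or argue directly in the occupation basis, where $a_x^*$ raises $n_x$ by one and hence conjugation by $\prod_y q(y)^{n_y}$ produces exactly $q(x)$. This is in fact the convention the paper uses in the proof of the main theorem, where $T=\Gamma(q_\mu)=\exp\bigl(\mu\,\dG(f(\tfrac{s}{d}))\bigr)$ with $q_\mu=e^{\mu f(s/d)}$, so the stated definition $\Gamma(A)=\exp(\dG(A))$ should be read accordingly. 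One further small point: your adjoint derivation of the second identity, done carefully, gives $a_x\Gamma(q)=\Gamma(q)\,q(x)\,a_x$; this agrees with the $\bar q(x)$ appearing in \eqref{eq:commut0} only when $q$ is real-valued, which is all that is needed here since $q_\mu>0$, but the distinction is worth recording if you state the formula for general complex $q$.
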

	
	We can now define the central object of the proof.
	
	\begin{definition}[Exponential tilting operator]
		Set
		\begin{align}\label{Tdef}
			T=&\Gamma(q_\mu) = \exp(\mu \dG (f(\tfrac{s}{d})) ).
		\end{align}
	\end{definition}
	Writing this out explicitly,
	\begin{align}\label{Tdef'}
		T=\exp\left(\mu \sum_{x\in\Lambda}f\left(\tfrac{s(x)}{d}\right)  n_x\right).
	\end{align}
	Observe that $T$ is self-adjoint and invertible; see \eqref{qDef}.\\

	We recall the setup of Theorem \ref{thm:MVB}. In particular, we fix the total particle number to be $N$ and we fix two numbers $0\le\alpha<\beta\le1$. The central idea in the exponential tilting method is to simply write 	
	\[
	\begin{aligned}
		&P_{ \bar N_X\ge \beta}e^{-\mathrm{i}Ht} P_{ \bar N_Y \ge 1-\al}\\
		=& P_{ \bar N_X\ge \beta} T^{-1}   T e^{-\mathrm{i}Ht}T^{-1} T P_{ \bar N_Y \ge 1-\al},
	\end{aligned}
	\]
	which leads to the inequality
	\begin{align}\label{eq:a1}
		&\norm{P_{ \bar N_X\ge \beta}e^{-\mathrm{i}Ht} P_{ \bar N_{Y}\ge 1-\alpha}}\notag\\\le&\norm{P_{ \bar N_X\ge \beta} T^{-1} } \norm{T e^{-\mathrm{i}Ht}T^{-1}}\norm{ T P_{ \bar N_Y \ge 1-\al}}.
	\end{align}    
	We will now estimate each term of the right-hand-side separately.

	\subsection*{Bound on $P_{\bar N_X\ge \beta} T^{-1}$ and $T P_{ \bar N_Y \ge 1-\al}$}
	In this section, we prove the following two bounds.
	\begin{lemma}
		On the $N$-particle sector, we have
		\begin{align}
			\norm{P_{ \bar N_X\ge \beta} T^{-1} } \le& e^{\mu(1-2\beta)N},\label{eq:a2} \\
			\norm{T P_{ \bar N_Y \ge 1-\al}}\le& e^{\mu(2\alpha-1)N}. \label{eq:a3}
		\end{align}
	\end{lemma}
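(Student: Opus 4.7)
The plan is to exploit the fact that $T$, $P_{\bar N_X \ge \beta}$, and $P_{\bar N_{Y^\Cp} \le \alpha}$ are all functions of the mutually commuting family of number operators $\{n_x\}_{x\in\Lambda}$. They can therefore be simultaneously diagonalized in the occupation-number basis, and the operator norms on the left-hand sides of \eqref{eq:a2}--\eqref{eq:a3} reduce to maximizing eigenvalues of $T^{\pm 1}$ over occupation-number configurations subject to the constraint imposed by the relevant projector. The only further input is the observation, which follows from \lemref{lemma:XYgeometry} together with the shape of $f$ in \eqref{fDef}, that the weight $f(s(x)/d)$ equals $+1$ for $x\in X$, equals $-1$ for $x\in Y$, and lies in $[-1,1]$ throughout $\Lambda$.

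For \eqref{eq:a2}, on an occupation-number eigenstate in the range of $P_{\bar N_X \ge \beta}$ (so $N_X \ge \beta N$) I would split the exponent of $T^{-1}$ as
\[
\sum_{x\in\Lambda} f(s(x)/d)\,n_x \;=\; N_X + \sum_{x\notin X} f(s(x)/d)\,n_x \;\ge\; N_X - (N - N_X) \;=\; 2N_X - N \;\ge\; (2\beta-1)N,
\]
using $|f|\le 1$ for the first inequality. Thus $T^{-1}$ has eigenvalue at most $e^{-\mu(2\beta-1)N} = e^{\mu(1-2\beta)N}$ on any such state, proving \eqref{eq:a2}. For \eqref{eq:a3}, on an eigenstate in the range of $P_{\bar N_{Y^\Cp}\le \alpha}$, equivalently $N_Y \ge (1-\alpha)N$, I would use the complementary split
\[
\sum_{x\in\Lambda} f(s(x)/d)\,n_x \;=\; -N_Y + \sum_{x\in Y^\Cp} f(s(x)/d)\,n_x \;\le\; -N_Y + N_{Y^\Cp} \;=\; N - 2N_Y \;\le\; (2\alpha-1)N,
\]
so $T$ has eigenvalue at most $e^{\mu(2\alpha-1)N}$ on any such state, proving \eqref{eq:a3}.

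I do not anticipate any real obstacle: the lemma is bookkeeping that cashes in the geometric construction carried out in Steps 1 and 2, where the function $s$ was built to have level sets separating $X$ from $Y$ and $f$ was truncated precisely so that the weight saturates at $\pm 1$ on the two regions. The genuine analytic work of Theorem \ref{thm:MVB} is deferred to Step 4, where the deformed propagator norm $\|T e^{-\mathrm{i}tH}T^{-1}\|$ must be controlled via the short-range condition \eqref{SR} and the pull-through formulas \eqref{eq:commut0}.
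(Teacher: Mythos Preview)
Your proof is correct and follows essentially the same approach as the paper: both exploit that $T$ and the projectors commute, then bound the exponent $\sum_x f(s(x)/d)\,n_x$ using Lemma~\ref{lemma:XYgeometry} together with $|f|\le 1$. The only cosmetic difference is that the paper routes the argument through the three-region decomposition $X^\infty\cup W^0\cup Y^\infty$ and operator inequalities, whereas you split directly into $X$ versus $X^\Cp$ (respectively $Y$ versus $Y^\Cp$) in the occupation basis; the arithmetic is identical and your version is, if anything, slightly more streamlined.
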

	\begin{proof}
		By Definition \eqref{fDef}, the function  $ f(\tfrac{s(x)}{d})$ is a regularized version of the map $x\mapsto \frac2d s(x)$ in the sense that it   coincides with $x\mapsto \frac2d s(x)$ on $W^0$ and continuously  becomes constant on $X^\infty\cup Y^\infty=(W^0)^\Cp $. Explicitly, 
		\[
		f\left(\frac{s(x)}{d}\right)=\one_{X^\infty}(x)-\one_{Y^\infty}(x)+\frac2d s(x)\one_{W^0}(x).
		\]
		Hence, by the Definition  \eqref{Tdef} of $T$,
		\begin{align*}
			T=&
			\mathrm{exp}\Big(\mu(N_{X^\infty}-N_{Y^\infty})+\frac{2\mu}{d}\mathrm{d}\Gamma(s(x)\one_{W^0}(x))\Big).
		\end{align*}

		We now aim to prove the first estimate \eqref{eq:a2}. 
		Using  that  $-s(x)\le\frac d2$ for $x\in W^0$, we find 
		\[
		-\frac{2\mu}{d}\mathrm{d}\Gamma(s(x)\one_{W^0}(x))
		= -\frac{2\mu}{d}\sum_{x\in W^0}s(x) n_x
		\le \mu N_{W^0}.
		\]
		As both sides of this operator inequality commute (both operators are diagonal in the occupation basis), this implies
		\begin{align*}
			T^{-1}=& \exp\del{\mu(N_{Y^\infty}-N_{X^\infty})-\frac{2\mu}{d}\mathrm{d}\Gamma(s(x)\one_{W^0}(x))}\\
			\le& e^{\mu(N_{Y^\infty\cup W^0}-N_{X^\infty})}.
		\end{align*}
		Recall from Lemma \ref{lemma:XYgeometry} that $X\subset X^\infty$ and $Y\subset Y^\infty$. Hence, on the subspace $\mathrm{Ran}(P_{ \bar N_X\ge \beta})$, we have  $N_{X^\infty}\ge N_X\ge\beta N$ and $N_{Y^\infty\cup W^0}\le N_{X^\Cp }\le(1-\beta)N$, where	$S^\Cp=\Lambda\setminus S$ denotes the complement of $S$ in $\Lam$.
		Combining these estimates, 
		we deduce that 
		\[
		\begin{aligned}
			\norm{T^{-1}P_{ \bar N_X\ge \beta}}
			\le& \norm{e^{\mu(N_{Y^\infty\cup W^0}-N_{X^\infty})}P_{ \bar N_X\ge \beta}}\\
			\le& e^{\mu(1-2\beta)N}.
		\end{aligned}
		\]
		Since $P_{ \bar N_X\ge \beta}$ and $T^{-1}$ are self-adjoint, we have $\norm{P_{ \bar N_X\ge \beta} T^{-1} }=\norm{T^{-1}P_{ \bar N_X\ge \beta}}$ and so \eqref{eq:a2} follows.
		
		The second estimate \eqref{eq:a3} is proven in the same way, using that $N_{Y^\infty}\ge N_Y\ge(1-\alpha)N$ and $N_{X^\infty\cup W^0}\le N_{Y^\Cp }\le\alpha N$ on $\mathrm{Ran}(P_{ \bar N_Y \ge 1-\al})$, together with $s(x)\le\frac d2$ for $x\in W^0$. 
	\end{proof}

	\subsection*{Bound on the deformed propagator}
	To bound \eqref{eq:a1}, it remains to estimate the norm of the deformed propagator $T e^{-\mathrm{i}Ht} T^{-1}$. 
	\begin{lemma}\label{lem8}
		Suppose that Assumptions \ref{C1}--\ref{C2} on the Hamiltonian hold. Then, on the $N$-particle sector, we have, for all $t\in\mathbb{R}$,
		\begin{equation}\label{eq:a6}
			\norm{T e^{-\mathrm{i}Ht} T^{-1}}\le e^{aNv |t|}.
		\end{equation}
	\end{lemma}
	\begin{proof}
		For any bounded operator $A$, we abbreviate 	 $\tilde A=T AT^{-1}$. 	Since $V(\{n_x\})$ commutes with $T$, we have $\tilde H=\tilde H_{0}+V$ and so 
		\begin{equation*}
			\tilde U_{t}=	T e^{-\mathrm{i}Ht} T^{-1}=e^{-\mathrm{i}t\tilde H}=e^{-\mathrm{i}t(\tilde H_{0}+V)}.
		\end{equation*}
		
		For a bounded operator $A$, we also denote $\im A=\frac{A-A^\dagger}{2\mathrm{i}}$, which is always self-adjoint.  Given any state $\psi$ in the $N$-particle sector, we compute
		\begin{align}
			\label{S21}
			\di_t\norm{\tilde U_{t}\psi}^2=&2\br{\tilde U_{t}\psi, (\im \tilde H_{0})\tilde U_{t}\psi }\notag\\
			\le& 2\sup\mathrm{spec}\, (\im \tilde H_{0})\norm{\tilde U_{t}\psi}^2.
		\end{align}
		Here, for a self-adjoint operator $A$, $\sup\mathrm{spec}(A)$ refers to the supremum over the spectrum of $A$.

		Using Gronwall's lemma and taking the supremum over normalized $N$-particle states $\psi$, it follows that
		\[
		\|\tilde U_{t}\|
		\le \begin{cases}
			e^{t\sup\mathrm{spec}\, (\im \tilde H_{0})},\quad &t>0,\\
			e^{-t\inf\mathrm{spec}\, (\im \tilde H_{0})},\quad &t<0,
		\end{cases}
		\]
		and so
		\begin{equation}\label{eq:a4}
			\|\tilde U_{t}\|\leq e^{|t| \|\im \tilde H_{0}\|}.
		\end{equation}
		
		It thus remains to bound $\|\im \tilde H_{0}\|$.  We first calculate $\im \tilde H_{0}$ by using Proposition \ref{proposition:Gammaproperties} (ii) with $q=q_\mu$ from \eqref{qDef}. This gives
		\begin{align*}
			\tilde H_{0}
			=&T H_0 T^{-1}\\
			=&  \Gamma(q_\mu) \left(\sum_{x,y \in \Lambda} J_{xy} a_x^\dagger a_y\right) \Gamma(q_\mu^{-1})\notag\\
			=& \sum_{x,y \in \Lambda} J_{xy} q_\mu(x)q_\mu^{-1}(y) a_x^\dagger a_y
			\notag\\=& \sum_{x,y \in \Lambda} J_{xy}\exp(\mu(f(\tfrac{s(x)}{d}) - f(\tfrac{s(y)}{d} )) ) a_x^\dagger a_y.
		\end{align*}
		
		Since $J_{xy}=\bar J_{yx}$, we find
		\begin{align*}
			\im 	\tilde H_{0}=&  \dG(\tilde J),\\
			\textnormal{for }\tilde J_{xy}=&\frac1{\mathrm{i}}J_{xy}\sinh(\mu (f(\tfrac{s(x)}{d}) - f(\tfrac{s(y)}{d} ))).
		\end{align*}
		By Proposition \ref{proposition:Gammaproperties} (i), we have
		\begin{equation}\label{216}
			\|\im 	\tilde H_{0}\|=\|\dG(\tilde J)\|\leq N\|\tilde J\|
		\end{equation}
		and so it remains to bound the norm of the deformed hopping matrix, $\|\tilde J\|$.
		

		To this end, observe that $f(s)$ satisfies $\abs{f(s)-f(s')}\le \min (2,2 \abs{s-s'})$ for all $s,s'\in\Rb$ (see \eqref{fDef}). Using this and the fact   that    $s(x)$ is $1$-Lipschitz, cf.\ \eqref{sLip}, we have  
		\begin{align*}
			\abs{f(\tfrac{s(x)}{d}) - f(\tfrac{s(y)}{d} )}
			\le \min (2,\tfrac{2}{d}\abs{x-y}),\quad x,\,y\in\Lam. 
		\end{align*}
		Recalling that $\mu=\tfrac{da}{2}$, this  implies
		\begin{align}
			\label{qId}
			\abs{\tilde J_{xy}}\le \abs{J_{xy}}\sinh(a\min\{d,|x-y|\}).
		\end{align}
		
		By the Schur test for matrix norms and the short-range Assumption \ref{C1},
		\[
		\|\tilde J\|\leq \max_{x\in\Lambda}\sum_{y\in\Lambda}|J_{xy}|
		\sinh(a|x-y|) \le av.
		\]
		Combining this estimate with \eqref{eq:a4} and \eqref{216} proves the lemma.
	\end{proof}

	We now have all the ingredients in place to prove our main result.
	
	\begin{proof}[Proof of Theorem \ref{thm:MVB}]
		Combining \eqref{eq:a1}, \eqref{eq:a2}, \eqref{eq:a3} and \eqref{eq:a6} and recalling that $\mu=\frac{a}{2}d_{XY}$, we find that
		\[
		\begin{aligned}
			&\norm{P_{ \bar N_X\ge \beta}U_t P_{ \bar N_Y \ge 1-\al}} \\
			\le& \exp(2\mu(\beta-\alpha)N)
			\exp( -|t|v aN)\\
			=& \mathrm{exp}\big(-aN\big[(\beta-\alpha)d_{XY}-v|t|\big]\big),
		\end{aligned}
		\]
		which proves \eqref{MVB}.
	\end{proof}
	
	We remark that for time-dependent Hamiltonian $H(t)$ satisfying Assumptions \ref{C1} and \ref{C2} uniformly for all times, inequalities \eqref{eq:a4} -- \eqref{216} remain valid, and therefore \lemref{lem8} generalizes to $H(t)$, upon replacing $e^{-\mathrm{i}Ht}$ by the usual time-ordered propagator
	\[
	U(t,0)=\mathcal T \exp\left(\int_0^t H(s) \mathrm{d} s\right).
	\] Since \lemref{lem8} is the only place where propagator estimate is involved (see \eqref{eq:a1}), the conclusion of \thmref{thm:MVB} extends to $H(t)$, with $e^{-\mathrm{i}Ht}$ replaced by $U(t,0)$ in eq.~\eqref{MVB}.
\end{document}